\def\identity{\leavevmode\hbox{\small1\kern-3.8pt\normalsize1}}
\renewcommand{\epsilon}{\varepsilon}
\newtheorem{definition}{Definition}
\newtheorem{prop}[definition]{Proposition}
\newtheorem*{rep@theorem}{\rep@title}
\newcommand{\newreptheorem}[2]{%
\newenvironment{rep#1}[1]{%
 \def\rep@title{#2 \ref{##1} (restatement)}%
 \begin{rep@theorem}}%
 {\end{rep@theorem}}}
\def\ba#1\ea{\begin{align}#1\end{align}}
\def\ban#1\ean{\begin{align*}#1\end{align*}}
\newcommand{\be}{\begin{equation}}
\newcommand{\ee}{\end{equation}}
\def\benum{\begin{enumerate}}
\def\eenum{\end{enumerate}}
\def\squareforqed{\hbox{\rlap{$\sqcap$}$\sqcup$}}
\def\qed{\ifmmode\squareforqed\else{\unskip\nobreak\hfil
\penalty50\hskip1em\null\nobreak\hfil\squareforqed
\parfillskip=0pt\finalhyphendemerits=0\endgraf}\fi}
\def\endenv{\ifmmode\;\else{\unskip\nobreak\hfil
\penalty50\hskip1em\null\nobreak\hfil\;
\parfillskip=0pt\finalhyphendemerits=0\endgraf}\fi}
\newcommand{\tr}{\text{tr}}
\newcommand{\<}{\langle}
\renewcommand{\>}{\rangle}
\def\be{\begin{equation}}
\def\ee{\end{equation}}
\def\ben{\begin{eqnarray}}
\def\een{\end{eqnarray}}
\def\bei{\begin{itemize}}
\def\eei{\end{itemize}}
\mathchardef\ordinarycolon\mathcode`\:
\def\vcentcolon{\mathrel{\mathop\ordinarycolon}}
\newcommand{\nc}{\newcommand}
 \nc{\proj}[1]{|#1\rangle\!\langle #1 |} 
\nc{\avg}[1]{\langle#1\rangle}
\nc{\conv}{\operatorname{conv}}
\nc{\ox}{\otimes} \nc{\dg}{\dagger} \nc{\dn}{\downarrow}
\nc{\lmax}{\lambda_{\text{max}}}
\nc{\lmin}{\lambda_{\text{min}}}
\nc{\csupp}{{\operatorname{csupp}}}
\nc{\qsupp}{{\operatorname{qsupp}}} \nc{\var}{\operatorname{var}}
\nc{\rar}{\rightarrow} \nc{\lrar}{\longrightarrow}
\nc{\poly}{\operatorname{poly}}
\nc{\polylog}{\operatorname{polylog}} \nc{\Lip}{\operatorname{Lip}}
\nc{\Om}{\Omega}
\nc{\wt}[1]{\widetilde{#1}}
\def\>{\rangle}
\def\<{\langle}
\nc{\glneq}{{\raisebox{0.6ex}{$>$}  \hspace*{-1.8ex} \raisebox{-0.6ex}{$<$}}}
\nc{\gleq}{{\raisebox{0.6ex}{$\geq$}\hspace*{-1.8ex} \raisebox{-0.6ex}{$\leq$}}}
\nc{\vholder}[1]{\rule{0pt}{#1}}
\nc{\wh}[1]{\widehat{#1}}
\nc{\h}[1]{\widehat{#1}}
\nc{\ob}[1]{#1}
\def\beq{\begin {equation}}
\def\eeq{\end {equation}}
\def\be{\begin{equation}}
\def\ee{\end{equation}}
\nc{\eq}[1]{(\ref{eq:#1})} 
\nc{\eqs}[2]{\eq{#1} and \eq{#2}}
\nc{\eqn}[1]{Eq.~(\ref{eqn:#1})}
\nc{\eqns}[2]{Eqs.~(\ref{eqn:#1}) and (\ref{eqn:#2})}
\nc{\region}{\cS\cW}
\newenvironment{protocol*}[1]
  {
    \begin{center}
      \hrulefill\\
      \textbf{#1}
  }
  {
    \vspace{-1\baselineskip}
    \hrulefill
    \end{center}
  }
\begin{document}

\title{Activation of monogamy in non-locality using local contextuality }
\author{Debashis \surname{Saha}}
%\email{debashis.saha@phdstud.ug.edu.pl}
\affiliation{Institute of Theoretical Physics and Astrophysics, National Quantum Information Center, Faculty of Mathematics, Physics and Informatics, University of Gda\'{n}sk, 80-952 Gda\'{n}sk, Poland}
\author{Ravishankar \surname{Ramanathan}}
%\email{ravishankar.r.10@gmail.com}
\affiliation{Institute of Theoretical Physics and Astrophysics, National Quantum Information Center, Faculty of Mathematics, Physics and Informatics, University of Gda\'{n}sk, 80-952 Gda\'{n}sk, Poland}

\begin{abstract}
A unified view on the phenomenon of monogamy exhibited by Bell inequalities and non-contextuality inequalities arising from the no-signaling and no-disturbance principles is presented using the graph-theoretic method introduced in \textit{Phys. Rev. Lett. 109, 050404 (2012)}. 
We propose a novel type of trade-off, namely Bell inequalities that do not exhibit monogamy features of their own can be activated to be monogamous by the addition of a local contextuality term. This is illustrated by means of the well-known $\mathcal{I}_{3322}$ inequality, and reveals a resource trade-off between bipartite correlations and the local purity of a single system.
%The non-necessity of the condition for the existence of a monogamy relation is shown by means of an explicit counter-example. 
In the derivation of novel no-signaling monogamies, we uncover a new feature, namely that two-party Bell expressions that are trivially classically saturated can become non-trivial upon the addition of an expression involving a third party with a single measurement input. 
\end{abstract}

\maketitle

{\it Introduction.} The Bell theorem \cite{Bell} and the Kochen-Specker theorem \cite{KS} are cornerstone results in foundations of quantum mechanics. 
While the Bell theorem demonstrates that local hidden variable theories are incompatible with the statistical predictions of quantum mechanics, the Kochen-Specker (KS) theorem demonstrates the incompatibility of quantum theory with the assumption of ``outcome non-contextuality" when describing systems with more than two distinguishable states. In other words, the KS theorem shows that there are quantum measurements whose outcomes cannot be predefined in a noncontextual manner, i.e., independent of the measurement's context (the choice of jointly measurable tests that may be performed together). 
However, Bell theorem can be interpreted as a specific case of this feature where the measurement's context is remote. 

The non-local correlations between spatially separated systems that lead to the violation of the well-known Clauser-Horne-Shimony-Holt (CHSH) inequality exhibit the phenomenon of \textit{monogamy} \cite{Toner}. In particular, it was proved by Toner in \cite{Toner} that in a Bell experiment with three spatially separated parties Alice, Bob and Charlie, when Alice and Bob observe a violation of the CHSH inequality between their systems, the no-signaling principle imposes that Alice and Charlie cannot at the same time observe a violation of the inequality. This feature of non-local correlations reflects the monogamy of entanglement in the underlying quantum state used in the Bell experiment, and is significant in cryptographic scenarios \cite{Masanes, Pawlowski} where Alice and Bob can verify a sufficient Bell violation to guarantee that their systems are not much correlated with any eavesdropper's system. In fact, such a trade-off in the non-local correlations can be seen for every Bell inequality \cite{PB}. 
%To be precise, consider any inequality $\mathcal{I}_{AB} \leq \mathcal{L}_{I}$ in the $(2,m_A,m_B, k_A, k_B)$ Bell scenario, i.e., two parties with $m_A$ inputs for Alice and $m_B$ inputs for Bob with $k_A$ and $k_B$ outputs respectively. When this Bell experiment is performed by Alice simultaneously with $m_B$ Bobs, a monogamy relation manifests itself, i.e. there holds
%%for an inequality with $m_A$ settings for Alice and $m_B$ settings for Bob, a trade-off in the violation due to the no-signaling constraint is seen when the Bell experiment is performed by Alice together with $m$ Bobs. This non-locality monogamy relation for an inequality of the form $\mathcal{I}_{AB} \leq R$ takes the form
%\begin{equation}
%\sum_{i=1}^{m_B} \mathcal{I}_{A B^{(i)}} \stackrel{NS}{\leq} m_B \mathcal{L}_I. 
%\end{equation}
Other (stricter) no-signaling monogamy relations for specific classes of Bell inequalities have also been discovered \cite{TDS, AGCA, RH, ADPTA}. 

A general constraint analogous to no-signaling is also imposed in a single system contextuality scenario. This is the \textit{no-disturbance principle} \cite{RSK} which is a consistency constraint that the probability distribution observed for the outcomes of measurement of any observable is independent of which set of other (co-measurable) observables it is measured alongside. It was shown in \cite{RSK} that this constraint also imposes a trade-off in the violations of non-contextuality inequalities on a single system, the simplest of these inequalities being the Klyachko-Can-Binicoglu-Shumovsky (KCBS) inequality \cite{KCBS}. Recently, the phenomenon of monogamy between the non-local correlations and the single system contextuality has been pointed out by Kurzynski-Cabello-Kaszlikowski \cite{CKK14} and has been experimentally verified \cite{expt}. In particular, \textit{any} violation of the CHSH inequality between Alice and Bob's systems implies that locally Alice's system alone does not exhibit a violation of the KCBS inequality and vice versa. 
%, i.e., 
%\begin{equation}
%\mathcal{I}^{\tiny{chsh}}_{AB} + \mathcal{I}^{\tiny{kcbs}}_{A} \stackrel{ND}{\leq} \mathcal{L}_{\tiny{chsh}} + \mathcal{L}_{\tiny{kcbs}}.
%\end{equation}

In this letter, we first outline a sufficient condition to derive no-signaling and no-disturbance monogamies using a graph-theoretic method, while 
%finding appropriate \textit{induced} chordal decompositions of the \textit{commutation graph} corresponds to the set of observables. Although an appropriate chordal decomposition is sufficient to derive a monogamy relation, we 
showing that this condition is not also necessary by means of a counter-example. 
%While the no-signaling value of a Bell inequality and the no-disturbance value of a non-contextuality inequality are both calculable by means of a linear program, the graph-theoretic 
This method allows us to derive new types of monogamies not previously studied in the literature. For instance, we show any two cycle inequalities for any lengths of the cycle, whether studied as non-contextuality \cite{LSW,AQBCC} or Bell inequalities \cite{BC}, exhibit a monogamy relation in both contextual and nonlocal scenarios under certain conditions. 
Then we proceed to our main result that a non-monogamous Bell inequality can be activated to be monogamous by the addition of a local (state-dependent) non-contextuality inequality. This novel type of monogamy, which we illustrate via the well-known $\mathcal{I}_{3322}$ Bell inequality reveals a resource trade-off between bipartite correlations and the local purity of a single system. This is a theory-independent non-locality analogue of the well-known Coffman-Kundu-Wootters trade-off \cite{CKW} for entanglement. 
%The fact that state-independent contextuality cannot replace the local term lends strength to the resource trade-off revealed by these monogamies. 
 Finally, in investigating methods for the derivation of novel no-signaling monogamies, we uncover a new feature, namely that Bell expressions that are trivially satisfied by a classical Alice and Bob, can give rise to no-signaling violations upon the addition of a third party with a single measurement input. We explore this counter-intuitive feature by means of an explicit example.   \\

{\it The graph-theoretic method to derive monogamy relations.}
Let us first state and explain the graph-theoretic method to derive general monogamy relations for both non-contextuality and Bell inequalities \cite{RSK}.
% To do this, we will first describe the notion of a \textit{commutation graph} that is used to represent the observables measured in the setup. 
%The set of all observables that appear in the combined Bell and contextuality experiment is represented by means of a commutation graph, where the vertices of the commutation graph represent the observables that appear in the Bell and non-contextuality inequalities and edges connect commuting observables. 
%In what follows, we will always consider the commutation graph to be finite, simple and undirected. 
All the observables $\{A_1, \dots, A_n\}$ that appear in the combined Bell and contextuality experiment, are represented by means of a commutation graph $G$, that is constructed as follows. Each vertex of $G$ represents an observable and two vertices are connected by an edge if the corresponding observables can be measured together.
Notice that in the commutation graph each clique represents a context, i.e., a jointly measurable system of observables.  \\
A graph is said to be chordal if all cycles of length four or more in the graph have a chord, i.e., an edge connecting two non-adjacent vertices in the cycle. As explained in the Supplemental Material, chordal graphs are known to have equivalent characterizations in terms of admitting a maximal clique tree as well as having a simplicial elimination scheme \cite{HL09}. The importance of chordal commutation graphs comes from the following proposition \cite{RSK} stating that a set of observables admits a joint probability distribution for its outcomes when its corresponding commutation graph is chordal. This statement can be seen as the generalization of the statement by Fine \cite{Fine82} showing that tree graphs admit a joint probability distribution. A similar statement in the context of relational databases was proven in \cite{BFMY83}. 
%\begin{prop}
%\label{prop-chordal}[see \cite{RSK}]
%A set of observables $A_1, A_2, \dots, A_n$ admits a joint probability distribution if the commutation graph $G(A_1, \dots, A_n)$ is chordal. 
%\end{prop}
For completeness, we give an explicit proof in the Supplementary Material \cite{SM} using the notion of simplicial elimination ordering for chordal graphs. It is also noteworthy that the above condition is not necessary for existence of a monogamy relation as discussed in the Supplementary Material \cite{SM}. 

\textit{Method}. 
{Consider a set of Bell and non-contextuality inequalities $\{\mathcal{I}_k \}$, and let $\omega_c = \sum_k \omega_c(\mathcal{I}_k)$ denote the maximum classical value of the combined expression, i.e., $\sum_k \mathcal{I}_k \leq \omega_c$ in any classical theory. Let $G_{\{\mathcal{I}_k\}}$ denote the commutation graph representing the observables measured by all the parties in the non-contextuality or Bell scenario. A no-signaling or no-disturbance monogamy relation holds if $G_{\{\mathcal{I}_k\}}$ can be decomposed into a set of induced chordal subgraphs $\{G_{\text{c-sub}}^{(j)}\}$, such that the sum of the algebraic values of the reduced Bell expressions in each of the chordal graphs equals $\omega_c$. }
%The key idea in deriving monogamy relations is therefore to decompose the commutation graph representing \textit{all} the observables in the setup into chordal graphs in such a way that the sum of the algebraic values of the expressions in each of the chordal graphs equals the sum of the classical values of the Bell inequalities in the setup. 
% as shown in \cite{PB,symext}, 
%\begin{proof}

{\bf Lemma.}
\textit{It is sufficient to consider \textit{induced} chordal subgraphs in the method described above. }\\
%That is, if any chordal decomposition exists to reproduce the sum of the classical values $\omega_c$, then a decomposition into induced subgraphs also necessarily exists to do so. 
To see this, suppose that a set of chordal graphs $G_{\text{c-sub}}^{(j)}$ with associated (classical equals no-signaling) values for the reduced Bell expressions $\omega_c^{(j)}$ exists satisfying $\sum_j \omega_c^{(j)} = \omega_c$. Consider the optimal classical deterministic strategy achieving value $\omega_c$ for the whole graph, this strategy must therefore necessarily achieve value $\omega_c^{(j)}$ on each of the chordal subgraphs. In other words, a set of optimal and compatible classical (deterministic) strategies for all of the $G_{\text{c-sub}}^{(j)}$ exists, where compatibility denotes that the values are assigned by the strategies to any observable $A_m$ is the same in each of the chordal subgraphs $A_m$ appears in. From this observation, it follows that each $G_{\text{c-sub}}^{(j)}$ may be taken to be induced, i.e., any edges between two vertices $A_i, A_j \in V(G_{\text{c-sub}}^{(j)})$ that are present in $G_{\{\mathcal{I}_k\}}$ may also be included in $G_{\text{c-sub}}^{(j)}$. 

However, remark that when considering different Bell expressions, it may happen that the classical deterministic strategies for these expressions are not compatible with each other, in the sense that the same observable $A_i$ is assigned different values in the optimal strategies for different expressions. In such cases one might have classical ``monogamies" where the classical value of the sum is strictly smaller than the sum of the individual classical values, i.e., $\omega_c < \sum_k \omega_c(\mathcal{I}_k)$ \cite{RSK, RH}.

{\it Cycle inequalities.-} 
As mentioned earlier, any Bell inequality can also be viewed as a non-contextuality inequality 
%(in terms of correlation of sequential measurements) 
on the combined system of distant parties. By incorporating other Bobs into Alice's system monogamy relations in nonlocal-contextual and contextual-contextual scenarios can be inferred from those in the Bell scenario. 
%For instance, one can interpret the usual monogamy relation for the CHSH inequality instead as monogamy in the nonlocal-contextual scenario in which Alice possesses higher than three dimensional system and two non-commuting observables measured on her system (out of the four observables that she measures) appear in a nonlocal CHSH inequality with Bob. 
But the most interesting case is when the contextuality test is performed on a single system (for example qutrit) which does not exhibit nonlocality. Recently, $n$-cycle ($n\geq 5$) non-contextual inequalities have been proposed and shown to be maximally violated by qutrits \cite{LSW,AQBCC}. The analogous cycle Bell inequality \cite{BC} with $n$ inputs each is also studied. Motivated by these facts, we first consider cycle inequalities that are the simplest nontrivial case to study unified monogamy relations. It is shown that monogamy exists for two $n$-cycle Bell inequality of same length in nonlocal-nonlocal scenario \cite{BKP}. In the nonlocal-contextual scenario, monogamy between CHSH and $n$-cycle non-contextuality inequality, and in the contextual-contextual scenario monogamy between any two cycle inequalities are pointed out \cite{JWG}. Here, we show a more general result in this direction. \\
{\bf Proposition.}
\textit{ Any two cycle inequalities with different length of the cycle, having  at least two common observables, are monogamous 
in any theory satisfying the no-disturbance principle. For the monogamy to hold in the nonlocal-contextual and contextual-contextual scenario, suitable additional commutation relations are required.}

The proof of this Proposition with the decompositions of induced chordal subgraphs is explicitly described in Supplementary Material \cite{SM}. This result can also be generalized for many outcome cycle inequalities \cite{BKP} in all three scenarios (see Supplementary Material \cite{SM}).\\

{\it Activation of monogamy relation in Bell inequality.-}
The monogamy relations for entanglement establish a strict trade-off in the shareability of this resource. In \cite{CKW}, Coffman, Kundu and Wootters established a monogamy relation for the tangle, which is a well-known measure of entanglement.
% defined as follows. For a two-qubit state $\rho_{AB}$, let $\lambda_i$ for $i = 1,2,3,4$ denote the square roots of the eigenvalues in decreasing order of the operator $\rho_{AB} (\sigma_y \otimes \sigma_y) \rho^*_{AB} (\sigma_y \otimes \sigma_y)$. The tangle $\tau_{AB}$ is then defined as $\tau_{AB} = \left[ \max\{ \lambda_1 - \lambda_2 - \lambda_3 - \lambda_4 , 0 \} \right]^2$. 
The monogamy relation for the tangle reads as:
\begin{equation}\label{CKW}
\tau^{(1)}_{A|BC} \geq \tau^{(2)}_{AB} + \tau^{(2)}_{AC}, 
\end{equation}
where $\tau^{(1)}_{A|BC}$ is the tangle between qubit $A$ and the pair $BC$. 
This relation says that amount of entanglement that qubit $A$ has with $BC$, cannot be less than the sum of the individual entanglements with qubits $B$ and $C$ separately. 

We now propose an analogous relationship between non-locality and contextuality. 
%In this tripartite scenario, Alice, Bob and Charlie share qutrit-qubit-qubit state. 
Consider the nonlocal-contextual scenario with three observers Alice, Bob and Charlie. Alice performs six possible $\pm1-$valued measurements, say, $A_i$, with $i \in \{1,...,6\}$. In each run of the experiment, Alice randomly measures two compatible observables from the set $\{A_1A_{2},A_2A_{3},A_3A_{4},A_4A_{5},A_5A_{1}\}$, or a single observable $A_6$ on her sub-system. While Bob and Charlie randomly performs one of their three $\pm1-$valued observables, say $B_1,B_2,B_3$ and $C_1,C_2,C_3$, on their respective subsystems. The $\mathcal{I}_{3322}$ inequality \cite{CG04} involving Alice's observables $A_1,A_4,A_6$ and Bob's observable $B_1,B_2,B_3$ is given by,
\begin{equation}\label{i3322}
\begin{split}
\mathcal{I}^{A_1A_4A_6}_{B_1B_2B_3} = & \langle A_1 \rangle + \langle A_4 \rangle +\langle B_1 \rangle + \langle B_2 \rangle - \langle A_1B_1 \rangle \\
& - \langle A_1B_2 \rangle - \langle A_1B_3 \rangle - \langle A_4B_1 \rangle - \langle A_4B_2 \rangle  \\
& + \langle A_4B_3 \rangle - \langle A_6B_1 \rangle + \langle A_6B_2 \rangle \leq 4.
\end{split}\end{equation} 
A no-signaling box exists that achieves the value $8$ for this inequality. This is a tight Bell inequality in the scenario of two parties, with three dichotomic inputs each, which exhibits several remarkable properties \cite{CG04,PV10,VW11}. 
%Firstly, it was shown in \cite{CG04} that the $\mathcal{I}_{3322}$ inequality is inequivalent to the CHSH inequality that there are two qubit states that violate $\mathcal{I}_{3322}$ that do not violate the latter. Secondly, numerical investigations on the maximum quantum violation of this inequality have suggested that this maximum value is only reached in the limit of an infinite dimensional Hilbert space \cite{PV10}. Thirdly, it was shown in \cite{VW11} that for all dimensions $d \geq 0$ and any observables, use of the maximally entangled state in that dimension does not lead to the maximum quantum violation. 
With regards to its monogamy properties, it was shown in \cite{CG04} that the non-locality that it reveals can be shared, i.e., there exist three qubit states $| \psi \rangle_{ABC}$ such that both $\rho_{AB} = \tr_C(|\psi \rangle \langle \psi|)$ and $\rho_{AC} = \tr_B(|\psi \rangle \langle \psi |)$ violate the inequality at the same time. Here, we show another remarkable property of the $\mathcal{I}_{3322}$ inequality, namely that the addition of a minimal local state-dependent contextuality term can make the $\mathcal{I}_{3322}$ inequality monogamous. More precisely, 
%in the precise sense outlined below.     
%We want to establish a monogamy relation between the bipartite correlations Alice-Bob, the correlations Alice-Charlie, and the correlations between commuting measurements on Alice's subsystem as,
\begin{equation}\label{i3322mr}
\mathcal{I}^{A_1A_4A_6}_{B_1B_2B_3}  + \mathcal{I}^{A_1A_4A_6}_{C_1C_2C_3} + 2 \mathcal{I}(5) \overset{ND}{\leq} 14,
\end{equation}
where $\mathcal{I}(5)=\sum\limits_{i=1}^{4} \langle A_iA_{i+1}\rangle -\langle A_5A_{1}\rangle \leq 3$ is the non-contextual bound.
\begin{proof} To show the validity of the above inequality, lets decompose the commutation graph of the whole quantity in the following way,
\begin{equation}\label{cd:i3322}
\begin{split}
& G\left[\mathcal{I}^{A_1A_4A_6}_{B_1B_2B_3}\right]  + G\left[\mathcal{I}^{A_1A_4A_6}_{C_1C_2C_3}\right] + 2 \times G\left[\mathcal{I}(5)\right] \rightarrow \\
& G\left[\mathcal{I}^{A_1A_4A_6}_{B_1,C_2,A_5} \right] + G\left[\mathcal{I}^{A_1A_4A_6}_{B_2,C_1,A_5}\right] + G\left[\mathcal{I}(5)^{A_1A_2A_3A_4}_{B_3}\right] \\
& + G\left[\mathcal{I}(5)^{A_1A_2A_3A_4}_{C_3}\right] 
\end{split}\end{equation} 
where $G[\mathcal{I}]$ denotes the commutation graph corresponding to the set of observables appear in the inequality $\mathcal{I}$. It can be checked (as shown in Fig. \ref{I3322}) that all the decomposed induced graphs are chordal and hence possess joint probability distributions.
\end{proof}

\begin{figure}[!htb]
\centering
\includegraphics[scale=0.61]{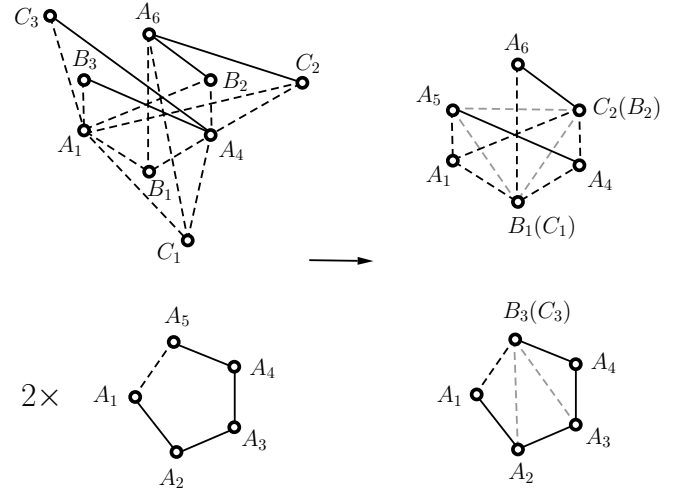}
\caption{Chordal decomposition given in Eq. \eqref{cd:i3322} to show the monogamy relation  between the bipartite $\mathcal{I}_{3322}$ correlations of Alice-Bob, Alice-Charlie, and the correlations between commutating measurements on Alice's subsystem. Here the \textit{dashed} line represents the contradiction edge and \textit{grey} lines denote the additional commutation relations.}
\label{I3322}
\end{figure}

To see the correspondence between \eqref{CKW} and \eqref{i3322mr}, one can reinterpret the monogamy relation \eqref{CKW} in terms of the purity of the subsystem $A$ (denoted by $\mathfrak{p}_A$) as, $\tau_{AB}+\tau_{AC}+\mathfrak{p}_A \leq 1$. Similarly, the purity of a system can be related to the resource of state dependent contextuality \cite{RH1}. 
This generalizes the view presented in \cite{CKK14} where a single CHSH inequality was shown to have a trade-off in violation with a local contextuality term. \\
In the above scenario on the other hand, it is worth noting that a single $\mathcal{I}_{3322}$ inequality does not exhibit a monogamy with the local non-contextuality inequality $\mathcal{I}(5)$. In fact, a no-signaling box that violates both inequalities can be found \cite{SM}.

\textit{Lack of monogamies due to non-trivial Bell inequalities with single inputs for some parties.-}
In this section, we study a novel feature of Bell inequalities that appears in the derivation of no-signaling monogamy relations. There exists two-party Bell expression which has the same classical and no-signaling value, can be turned into non-trivial inequality upon the addition of an expression involving a third party with a single measurement input.  We present such an example in the scenario of three parties with $3,2$ and $1$ inputs respectively and four outputs per setting. While this appears counter-intuitive at first sight, we explain that this arises due to an incompatibility between the optimal classical strategies for the sub-expressions into which the commutation graph of the whole Bell expression is decomposed.\\
In \cite{RH}, it was shown that in the paradigmatic example of correlation inequalities for binary outcomes, the parameter known as the \textit{contradiction number} gives a sufficient characterization of the monogamy. Namely, that if the removal of a certain number $m$ observables of any one party ($m$ is called the \textit{contradiction number} of the inequality) results in the residual expression having a local hidden variable description, then a monogamy manifests itself when Alice performs the correlation Bell experiment with $m+1$ Bobs. The existence of non-trivial Bell inequalities with single inputs for some of the parties as stated above then implies that this result does not readily extend to general inequalities for many outcomes. 
%and this arises due to an incompatibility between the optimal classical strategies for the sub-expressions into which the commutation graph of the whole Bell expression is decomposed. 
%As explained in the text, such incompatibilities do not arise in the decomposition into induced chordal graphs 

%Namely, there exist correlation inequalities for more outcomes with contradiction number one, 
%This arises because even though thi
%due to an incompatibility in the optimal classical strategies for the sub-expression  
%Here, we consider a natural generalization of the correlation inequalities into more outcomes, to the so-called linear games \cite{}.

We study a specific example of an inequality $\mathcal{I}^{A_1A_2A_3}_{B_1B_2B_3}$ which has a \textit{contradiction number} of one, i.e., upon the removal of the input $B_3$ for Bob, the residual expression $\mathcal{I}^{A_1A_2A_3}_{B_1B_2}$ admits a local hidden variable description (a joint probability distribution $P(A_1, A_2, A_3, B_1, B_2)$). The residual expression $\mathcal{I}^{A_1A_2A_3}_{C_1}$ also evidently admits a classical description $P(A_1,A_2,A_3,C_1)$ simply because Charlie only measures a single input in this expression. Intuitively, one might expect that the expression $\mathcal{I}^{A_1A_2A_3}_{B_1B_2C_1}$ also admits such a description, for instance via the Fine trick \cite{Fine82}
\begin{equation}
\begin{split}
& P(A_1,A_2,A_3,B_1,B_2,C_1) \\
& = \frac{P(A_1,A_2,A_3,B_1,B_2)P(A_1,A_2,A_3,C_1)}{P(A_1,A_2,A_3)}.
\end{split}\end{equation}
Here the subtlety arises. The optimal strategy that achieves the classical value for $\mathcal{I}^{A_1A_2A_3}_{B_1B_2}$ need not give rise to the same marginal distribution $P(A_1,A_2,A_3)$ as the optimal strategy for $\mathcal{I}^{A_1A_2A_3}_{C_1}$. This implies that the above construction does not automatically work, and therefore one might encounter violation of Bell expressions $\mathcal{I}^{A_1A_2A_3}_{B_1B_2C_1}$ for which $\mathcal{I}^{A_1A_2A_3}_{B_1B_2}$ is trivially saturated by a classical strategy and yet the contradiction arises from a third party measuring a single setting. 
%In other words, the addition of a trivial expression for a third party who only measures a single setting results in the set of observables no longer admitting a joint probability distribution. 
We give below an example of a Bell expression having such a property. \\
This is a Bell expression $\mathcal{I}^{A_1A_2A_3}_{B_1B_2B_3}$ belonging to the family of tight Bell expressions in the $(2,3,3,4,4)$ scenario found by Cabello \cite{Cab12}, and has been experimentally tested \cite{Cabexpt12}. The expression of interest to us is actually the reduced Bell expression $\mathcal{I}^{A_1A_2A_3}_{B_1B_2C_1}$ given as
\begin{eqnarray}\label{eq:cab12bi}
\mathcal{I}^{A_1A_2A_3}_{B_1B_2C_1} & = \langle A_1^{10} B_1^{10} \rangle + \langle A_1^{01} B_2^{10} \rangle + \langle A_1^{11} C_1^{10} \rangle + \nonumber \\
&\langle A_2^{10} B_1^{01} \rangle + \langle A_2^{01} B_2^{01} \rangle + \langle A_2^{11} C_1^{01} \rangle +  \\
&\langle A_3^{10} B_1^{11} \rangle + \langle A_3^{01} B_2^{11} \rangle - \langle A_3^{11} C_1^{11} \rangle \leq 7, \nonumber 
\end{eqnarray}
where $\langle A_k^{i_k,j_k} B_l^{i_l,j_l} \rangle$ denotes the mean value of the product of the $i_k$-th and $j_k$-th bit (assigned $\pm 1$ values) of the result of measuring $A_k$ times the $i_l$-th and $j_l$-th bit of the result of measuring $B_l$ (respectively $C_l$). Evidently a classical strategy can be found for the residual expression $\mathcal{I}^{A_1A_2A_3}_{B_1B_2}$ when $B_3$ is removed, for instance Alice and Bob output $00$. The residual expression $\mathcal{I}^{A_1A_2A_3}_{C_1}$ also trivially admits a classical strategy, for instance Charlie outputs $00$ and Alice outputs $00$ for inputs $A_1$ and $A_2$ and $10$ for input $A_3$. On the other hand, no classical strategy exists to saturate both expressions simultaneously, and it can be shown that $\mathcal{I}^{A_1A_2A_3}_{B_1B_2C_1} \leq 7$ in any local hidden variable theory. From the extremal box of the no-signaling polytope given in table \ref{nsbox}, it can be verified that 
the algebraic value of the inequality is 9. This implies that the Bell inequalities and the corresponding classical polytopes with single inputs for some of the parties have some interesting property.

\begin{table}[http]
\begin{center}
\caption{In the following table the no-signaling probability distribution for which the tripartite inequality Eq. \eqref{eq:cab12bi} attains its algebraic value 9, is described. Each column corresponds to six different inputs $A_k, B_l, C_1$, and for a particular input only the events $P(a_0^{(k)}a_{1}^{(k)}, b_0^{(l)}b_1^{(l)}, c_0^{(1)}c_1^{(1)} | A_k, B_l, C_1)$  with non-zero probability are listed where the probability of each of these events is  $\frac{1}{8}$.} \label{nsbox}
  \begin{tabular}{ | m{1.3cm} | m{1.3cm} |  m{1.3cm} |  m{1.3cm} | m{1.3cm} | m{1.3cm} | }
  \hline
    \multicolumn{1}{|c|}{$A_1 B_1 C_1$} &\multicolumn{1}{c}{$A_1 B_2 C_1$} &\multicolumn{1}{|c|}{$A_2 B_1 C_1$} &\multicolumn{1}{|c|}{$A_2 B_2 C_1$} &\multicolumn{1}{|c|}{$A_3 B_1 C_1$} &\multicolumn{1}{|c|}{$A_3 B_2 C_1$}\\
    \hline
        \scriptsize{(00, 00, 00)}  &  \scriptsize{(00, 00, 00)}  &\scriptsize{(00, 00, 00)} & \scriptsize{(00, 00, 00)}& \scriptsize{(00, 00, 01)} & \scriptsize{(00, 00, 01)} \\ 
   \scriptsize{(00, 00, 01)}  &  \scriptsize{(00, 00, 01)}  & \scriptsize{(00, 00, 10)} &\scriptsize{(00, 10, 10)}  & \scriptsize{(00, 00, 10)}& \scriptsize{(00, 11, 10)} \\ 
         \scriptsize{(00, 01, 00)}    & \scriptsize{(00, 01, 00)}  & \scriptsize{(01, 00, 01)} &\scriptsize{(01, 01, 01)} &\scriptsize{(01, 00, 00)} &  \scriptsize{(01, 01, 00)}\\ 
          \scriptsize{(00, 01, 01)} &  \scriptsize{(00, 01, 01)}  & \scriptsize{(01, 00, 11)} & \scriptsize{(01, 11, 11)}& \scriptsize{(01, 00, 11)} & \scriptsize{(01, 10, 11)}\\ 
		    \scriptsize{(01, 00, 10)} &  \scriptsize{(01, 10, 10)}   &\scriptsize{(10, 01, 01)} & \scriptsize{(10, 00, 01)} & \scriptsize{(10, 01, 00)} & \scriptsize{(10, 00, 00)}\\ 
        \scriptsize{(01, 00, 11)}   &\scriptsize{(01, 10, 11)} & \scriptsize{(10, 01, 11)}& \scriptsize{(10, 10, 11)} &\scriptsize{(10, 01, 11)} & \scriptsize{(10, 11, 11)} \\ 
         \scriptsize{(01, 01, 10)}  &\scriptsize{(01, 11, 10)}& \scriptsize{(11, 01, 00)}& \scriptsize{(11, 01, 00)}& \scriptsize{(11, 01, 01)}& \scriptsize{(11, 01, 01)} \\ 
          \scriptsize{(01, 01, 11)} & \scriptsize{(01, 11, 11)}& \scriptsize{(11, 01, 10)} & \scriptsize{(11, 11, 10)}& \scriptsize{(11, 01, 10)}& \scriptsize{(11, 10, 10)}  \\ \hline
  \end{tabular}
  
\end{center}
\end{table}

%{\it Discussion.}

{\it Conclusions.} In this paper, we have presented a unified view on monogamy relations for Bell and non-contextuality inequalities derived from the physical principles of no-signaling and no-disturbance. The unification was achieved by considering a graph-theoretic decomposition of the graph representing all the observables in the experiment into induced chordal subgraphs. We have used this method to show that any two generalized cycle inequalities (for any number of outputs) exhibit a monogamy relation. As a main result, a new feature, namely activation of monogamy of Bell inequality by considering local contextuality term is proposed. Finally, we uncovered an interesting new characteristic of Bell inequalities that trivial Bell expression can have no-signaling violations upon the addition of a third party with a single input. This implies that the study of Bell inequalities (and the corresponding classical polytopes) with single inputs for some of the parties becomes interesting.

Several open questions still remain. An interesting question is to settle the computational complexity of the chordal decomposition method for identifying monogamies. Another important question is to show  that local state-independent contextuality inequalities cannot replace the state-dependent ones in the monogamies we have formulated in the paper. If true, this would lend added strength to the resource-theoretic character of the trade-offs as suggested in \cite{CKK14}. It is also relevant to study the correspondence between such monogamy relation and the feature in which nonlocality can be revealed from local contextuality \cite{Cabello10,SCCP,NLCexpt}. It would also be of interest to identify the quantum boundaries of the novel trade-offs between non-locality and contextuality identified here, such as done for the case of Bell inequalities in \cite{TV06, KPR+11}. The extension of the above results to the multi-party scenario and network configurations such as in \cite{KPR+11} is of special interest, particularly with regard to cryptographic applications such as secret sharing \cite{SG01}. Finally, it would also be interesting to investigate the class of minimal non-contextuality inequalities whose addition activates the monogamy of any Bell inequality. 

{\em Acknowledgements.} We thank Remigiusz Augusiak and Marcin Paw\l{}owski for helpful discussions. This work is supported by the NCN grant 2014/14/E/ST2/00020 and the ERC grant QOLAPS.
%the EU grant RAQUEL (No. 323970). 

%\begin{thebibliography}{99}

\textbf{Supplemental Material.}

\textit{The graph-theoretic method to derive monogamy relations.---}

Here, we provide a brief explanation of the graph-theoretic method used in the main text, for more details see \cite{BP93, HL09}. We consider finite, simple and undirected graphs $G = (V(G), E(G))$. A subset $S \subset V(G)$ is said to be a separator of $G$ if two vertices in the same connected component of $G$ are in two distinct connected components of the graph $G \setminus S$. The set $S$ is said to be a $(u,v)$-\textit{separator} if it separates the vertices $u$ and $v$. A maximal clique tree of $G$ is a tree graph $T$ satisfying the following three conditions: (i) vertices of $T$ are associated with the maximal cliques of $G$, (ii) the edges of $T$ correspond to minimal separators, (iii) for any vertex $v \in V(G)$, the cliques containing $v$ yield a subtree of $T$. The reduced clique graph $\mathcal{C}_r(G)$ of a chordal graph introduced in \cite{HL09} is a graph whose vertices are the maximal cliques of $G$ and two cliques are joined by an edge if and only if their intersection separates them, i.e., for every $v_1 \in C_1 \setminus (C_1 \cap C_2)$ and $v_2 \in C_2 \setminus (C_1 \cap C_2)$ the vertex set $(C_1 \cap C_2)$ separates them. It is known \cite{HL09} that chordal graphs can equivalently be characterized as those graphs that admit a maximal clique tree. 

%A vertex $v \in V(G)$ is said to be simplicial if its neighborhood is a clique, where neighborhood refers to the vertices adjacent to $v$ in $G$. For $|V(G)| = n$, an ordering $v_1, \dots, v_n$ of the vertices if for every $k \in \{1, \dots, n-1\}$, $v_k$ is a simplicial vertex in the induced graph on the vertices $v_{k+1}, \dots, v_n$. 
%The crucial statement that allows the chordal graph to admit a joint probability distribution is the following lemma from \cite{}

\begin{prop}
	\label{prop-chordal}[see \cite{RSK}]
	A set of observables $A_1, A_2, \dots, A_n$ admits a joint probability distribution if the commutation graph $G(A_1, \dots, A_n)$ is chordal. 
\end{prop}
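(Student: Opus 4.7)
The plan is to prove the proposition by induction on the number of vertices $n$, exploiting the characterization of chordal graphs as exactly those admitting a simplicial (perfect) elimination ordering $v_1,\ldots,v_n$, where each $v_i$ is simplicial in the subgraph induced on $\{v_i,\ldots,v_n\}$; recall that a vertex is simplicial when its neighborhood induces a clique. The base case is immediate: if $G$ is a single clique, then all of $A_1,\ldots,A_n$ are jointly measurable and the observed measurement statistics for that context directly furnish the joint distribution.

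For the inductive step, I would let $v_1$ be a simplicial vertex of $G$ and set $C \defeq N_G(v_1)$, so that $\{v_1\}\cup C$ is itself a clique with a directly defined joint statistic $P_{12}(a_1,a_C)$. The graph $G\setminus\{v_1\}$ remains chordal, so by the inductive hypothesis the remaining observables admit a joint distribution $Q(a_2,\ldots,a_n)$ that is consistent with the measurement statistics on every context of $G\setminus\{v_1\}$. I would then glue these together by the Fine-type formula
\[
P(a_1,a_2,\ldots,a_n) \;=\; \frac{P_{12}(a_1,a_C)\, Q(a_2,\ldots,a_n)}{R(a_C)},
\]
where $R(a_C)$ is the common marginal on $C$, with the quotient read as $0$ whenever $R(a_C)=0$. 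The no-disturbance principle is essential here: it guarantees $\sum_{a_1} P_{12}(a_1,a_C) = \sum_{a_{V\setminus(\{v_1\}\cup C)}} Q(a_2,\ldots,a_n) = R(a_C)$, so the two natural ways of computing $R$ agree and the quotient is unambiguously defined.

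The crucial verification is that this construction reproduces the correct statistics on every clique of $G$. Any maximal clique $K$ of $G$ either (i) does not contain $v_1$, in which case $K$ is also a clique of $G\setminus\{v_1\}$ and the correct marginal follows by summing $a_1$ out of the inductive distribution, or (ii) contains $v_1$, in which case simpliciality forces $K\subseteq\{v_1\}\cup C$ and the marginal of $P$ on $K$ reduces to that of $P_{12}$ after summing coordinates outside $K$. The main obstacle is confirming that the glued $P$ is a bona fide (nonnegative and normalized) probability distribution; this again reduces to checking that $R(a_C)$ is a genuine common marginal, which is precisely the content of no-disturbance, and I expect this consistency check to be the most delicate step of the argument. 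I would close by recalling, as stressed in the main text, that chordality is sufficient but not necessary, so no converse needs to be addressed.
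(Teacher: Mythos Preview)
Your argument is correct. The inductive scheme via a simplicial elimination ordering is a standard and complete proof: simpliciality of $v_1$ ensures that $\{v_1\}\cup C$ is a clique with directly observed statistics, the hereditary chordality of $G\setminus\{v_1\}$ supplies the inductive hypothesis, and the no-disturbance principle guarantees that the two marginals on $C$ coincide so that the Fine-type gluing yields a genuine joint distribution with the correct clique marginals. Your case analysis for verifying the marginals on cliques containing or not containing $v_1$ is exactly what is needed.

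The paper, however, takes a different route in the Supplemental Material: rather than an induction it invokes the characterization of chordal graphs as those admitting a \emph{maximal clique tree} $CT(G)$, and writes down the joint distribution in one shot as
\[
P(A_1,\dots,A_n)\;=\;\frac{\prod_{v\in V(CT(G))} P(C_v)}{\prod_{(u,v)\in E(CT(G))} P(C_u\cap C_v)}\,,
\]
appealing to a lemma of Habib and Limouzy on the structure of triangles in the reduced clique graph to justify that such a tree exists and that the resulting formula has the right marginals. The two approaches are closely related (unfolding your induction reproduces exactly this product-over-cliques-divided-by-product-over-separators formula), but they trade off differently: your argument is more self-contained and elementary, requiring only the simplicial-vertex characterization and a single Fine gluing per step, while the paper's version delivers an explicit closed-form expression for the global distribution at the cost of importing heavier graph-theoretic machinery about clique trees and minimal separators. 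It is worth noting that the main text of the paper actually advertises the simplicial-elimination proof you wrote, even though the Supplemental Material ultimately presents the clique-tree construction instead.
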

\begin{proof} 
	The proof is by construction of a joint probability distribution that reproduces all the marginal distributions that can be observed in the experiment. We first recall the construction of a maximal clique tree for a chordal graph (see \cite{BP93,GHP95} for an explanation of this concept). For a chordal graph $G$, the reduced clique graph $\mathcal{H}_c(G)$ introduced in \cite{HL09} is a graph whose vertices are the maximal cliques of $G$, and two maximal cliques are joined by an edge iff their intersection is a separator, where the set of common vertices $C_1 \cap C_2$ to two maximal cliques $C_1, C_2$ is a separator if upon its deletion from the graph, the remaining vertices in the two cliques become disconnected. Any edge in $\mathcal{H}_c(G)$ connecting  $C_1$ and $C_2$ is labeled with $C_1 \cap C_2$. The following Lemma from \cite{HL09} is crucial to the construction of the joint probability distribution. \\
	%Let us recall the notion of a maximal clique tree which will be crucial to the construction of the joint probability distribution. \\
	
	{\bf Lemma} \textit{[\cite{HL09}]
		Let $G$ be a chordal graph. 
		Consider any triangle in $\mathcal{H}_c(G)$ connecting maximal cliques $C_1, C_2$ and $C_3$ of $G$. Then two of the edge labels $C_1 \cap C_2$, $C_2 \cap C_3$ and $C_2 \cap C_3$ are equal and included in the third. Maximal clique trees $CT(G)$ of $G$ can be constructed by deleting edges from the triangles of $\mathcal{H}_c(G)$ that share a common edge label with another edge, all maximal clique trees define the same set of separators $\{C_i \cap C_j \}$.  } 
	%The maximal clique trees of $G$ correspond to the maximum spanning trees of $\mathcal{H}_c(G)$ when the edges are labelled with the size of the corresponding minimal separator, all maximal clique trees of $G$ correspond to the same set of minimal separators. 
	%
	% with the three minimal separators labeling its edges. Then two of these minimal separators must be equal and included in the third. The maximal clique trees correspond to maximum spanning trees of $\mathcal{C}(G)$.
	
	Thus, for a commutation graph $G$ representing a set of observables $\{A_1, \dots, A_n\}$, each vertex $v$ in the maximum clique tree $CT(G)$ is labelled by a clique $C_v$ in $G$ and each edge $e$ in $CT(G)$ is labelled by a minimal separator $C_u \cap C_v$. The vertex and edge labels of the maximum clique tree then give the construction of the joint probability distribution as 
	\begin{equation}
	P(A_1, \dots, A_n) := \frac{\prod_{v \in V(CT(G))} P(C_v)}{\prod_{(u,v) \in E(CT(G))} P(C_u \cap C_v)}.
	\end{equation}
	The construction of the spanning clique tree guarantees that the above distribution reproduces every measurable distribution (i.e., the distribution of every clique in $G$) as it's marginal.
\end{proof}

\begin{figure}[!htb]
	\centering
	\includegraphics[scale=.59]{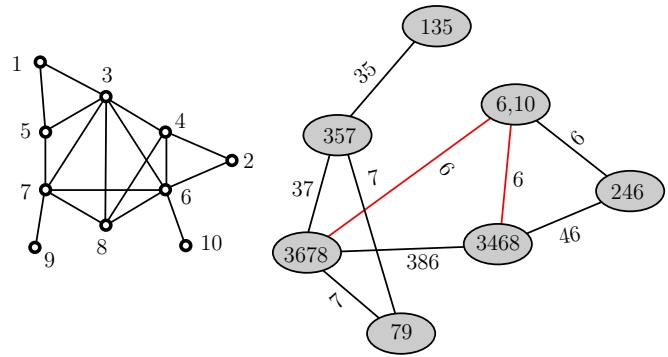}
	\caption{A chordal graph (left) and its reduced clique graph (right). Removing the red edges produces a maximal clique tree of the graph. A joint probability distribution for the observables can be constructed as a product of the vertex labels over the product of the edge labels, i.e., $\frac{\tiny{P(1,3,5)P(3,5,7)P(3,6,7,8)P(3,4,6,8)P(2,4,6)P(7,9)P(6,10)}}{\tiny{P(3,5)P(3,7)P(3,6,8)P(4,6)}P(7)P(6)}$.}
	\label{fig:chordal-gr}
\end{figure}
An example of a chordal graph and its maximal clique tree is presented in Figure \ref{fig:chordal-gr}. \\

\textit{Is a decomposition into chordal graphs necessary to derive monogamy?--- }

In this section, we study the question whether all no-signaling monogamies arise as a result of decomposition into chordal graphs. We exhibit a counter-example of a no-signaling monogamy relation that we prove cannot be derived in this manner, showing that the graph-theoretic decomposition while sufficient is not necessary. 
\begin{prop}
	There exist tight no-signaling monogamy relations of Bell inequalities (in two parties with three $\pm1$-valued settings each scenario) that do not arise from chordal decompositions of the commutation graph representing the corresponding observables. 
\end{prop}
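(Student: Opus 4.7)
The plan is to exhibit a concrete tripartite example together with a tight no-signaling monogamy between two bipartite Bell expressions in it, and then rule out every chordal decomposition of the associated commutation graph. I would work in a three-party scenario with Alice, Bob and Charlie each having three $\pm1$-valued inputs, and consider two copies of a suitably chosen tight Bell expression from the $(2,3,3,2,2)$ scenario that share Alice's observables, giving rise to a relation of the form
\[
\mathcal{I}^{A_1A_2A_3}_{B_1B_2B_3} + \mathcal{I}^{A_1A_2A_3}_{C_1C_2C_3} \;\leq\; L,
\]
with $L$ strictly smaller than the sum of the two individual no-signaling maxima. Such a tight relation can be extracted by solving the linear program that projects the tripartite no-signaling polytope in the $(3,3,3,2,2,2)$ scenario onto the two-body Bell expressions; tightness is then certified by producing an explicit no-signaling box that saturates $L$.

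Next, I would analyse the commutation graph $G$ of the nine observables. Since same-party observables do not commute while cross-party ones do, $G$ is the complete tripartite graph $K_{3,3,3}$. This graph is not chordal, since any two vertices from one part together with two from another form an induced $C_4$; more strongly, every induced chordal subgraph of $K_{3,3,3}$ must contain at most one vertex from at least two of the three parts. The induced chordal subgraphs are therefore \emph{split graphs} of the form $\{A_{i_1},\dots,A_{i_r}\}\cup\{B_j\}\cup\{C_k\}$, with any single party playing the role of Alice. Consequently only Alice-Bob correlators with a single fixed $j$ and Alice-Charlie correlators with a single fixed $k$ can appear together inside one chordal piece.

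With this structural restriction in hand, the final step is to show that every decomposition of $G$ into induced chordal pieces produces reduced Bell expressions whose classical maxima sum to strictly more than $L$. The mechanism responsible for the no-signaling monogamy is the joint use of several $B_j$'s and several $C_k$'s against Alice's shared observables, and the chordal restriction forbids both from appearing together in any one piece. A systematic case analysis over admissible decompositions, combined with a lower bound on the classical value of each piece derived from its restricted local marginal polytope, shows that the total classical sum always strictly exceeds $L$, certifying that the chordal-decomposition method cannot reach the true no-signaling bound.

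The main obstacle is precisely this case analysis. One must consider not only decompositions in which each Bell copy is contained in a single chordal piece, but also more subtle ones in which correlators from the same copy are split across several pieces and reassembled only at the level of classical values. Handling the latter requires exploiting the fact, noted already in the lemma of the main text, that optimal classical strategies on different chordal pieces must be compatible on the shared Alice observables; this forces the chordal-decomposition bound to equal the sum of piecewise classical maxima, which on our example is strictly larger than $L$. The required gap between this sum and $L$ is what ultimately proves the statement.
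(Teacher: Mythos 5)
Your overall strategy coincides with the paper's: exhibit an explicit tripartite monogamy $\mathcal{I}_{AB}+\mathcal{I}_{AC}\leq L$ certified by a linear program, observe that any induced chordal subgraph of the complete tripartite commutation graph can contain more than one observable from at most one party, and then rule out all admissible decompositions. However, as written the argument has two genuine gaps. First, the statement is an existence claim, and you never actually produce the Bell expression: ``a suitably chosen tight Bell expression'' extracted from an unspecified linear program is a placeholder, not a witness. The paper supplies a concrete one, the three-input XOR-game generalization of CHSH, $\mathcal{I}_{AB}=\langle A_1B_1\rangle-\langle A_1B_2\rangle-\langle A_1B_3\rangle+\langle A_2B_1\rangle+\langle A_2B_2\rangle-\langle A_2B_3\rangle+\langle A_3B_1\rangle+\langle A_3B_2\rangle+\langle A_3B_3\rangle\leq 5$, with no-signaling value $9$ and monogamy bound $\mathcal{I}_{AB}+\mathcal{I}_{AC}\leq 10$; without such an explicit choice nothing can be verified.

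Second, the step you yourself identify as ``the main obstacle''---showing that every decomposition into induced chordal pieces yields a bound strictly above $L$---is asserted rather than proved, and it is precisely here that the content of the proposition lies. For the correlation (XOR) example this can be closed by a counting argument rather than brute-force enumeration: the relation involves $18$ correlation edges of which $8$ must be left unsatisfied to reach the bound $10$; a chordal piece containing at most one $B_j$ and one $C_k$ carries at most six constraint edges and can dissatisfy at most two constraints, and to dissatisfy even two it must contain a four-cycle of constraint edges (pieces in which the constraint edges form a star, e.g.\ several $B_j$ against a single $A_i$, dissatisfy none). Hence one would need four chordal four-cycles plus two leftover edges, which forces some vertex among $\{B_1,B_2,B_3,C_1,C_2,C_3\}$ to lie on four distinct constraint edges, impossible since each lies on only three. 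Your proposal gestures at ``a lower bound on the classical value of each piece derived from its restricted local marginal polytope'' but gives no such bound; until the explicit inequality and this quantitative exclusion are supplied, the proof is incomplete.
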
  
\begin{proof}
	The counter-example is given by the correlation Bell inequality (XOR game) represented in Fig \ref{xor}, which is a generalization of the CHSH expression to three inputs. The inequality is explicitly given as
	\begin{eqnarray}\label{xorineq}
	\mathcal{I}_{AB} &=& \langle A_1 B_1 \rangle - \langle A_1 B_2 \rangle - \langle A_1 B_3 \rangle + \langle A_2 B_1 \rangle +\langle A_2 B_2 \rangle  \nonumber \\ && - \langle A_2 B_3 \rangle + \langle A_3 B_1 \rangle + \langle A_3 B_2 \rangle + \langle A_3 B_3 \rangle \leq 5. \nonumber \\
	\end{eqnarray}
	The no-signaling value of this inequality is $9$ and a monogamy relation holds when the Bell experiment is performed by Alice together with Bob and Charlie, i.e. in any no-signaling theory 
	\begin{equation}\label{eq:counter-ex}
	\mathcal{I}_{AB} + \mathcal{I}_{AC} \leq 10.
	\end{equation} 
	That such a relation holds can be directly verified by means of a linear program. 
	
	We now show that no chordal decomposition of the commutation graph of the observable set $\{A_1,\dots,C_3\}$ exists to recover the relation (\ref{eq:counter-ex}). A total of $18$ winning constraints (correlation and anti-correlation edges in the graph) enter in the relation (\ref{eq:counter-ex}). In order to derive the monogamy relation, out of those 8 cannot be satisfied. 
	Firstly, since each of the decomposed induced subgraphs is chordal, it cannot contain more than one observable from $\{B_1,B_2,B_3\}$ and  $\{C_1,C_2,C_3\}$. So, it may have maximum six edges in which maximum 2 winning constraints can be dissatisfied. On the other hand, to discard 2 winning constraints the decomposed graph should have a four cycle. Thus the only way to distribute the 18 edges, such that there are 8 winning constraints are dissatisfied, is to consider 4 chordal four-cycles and two additional edges. To have 4 chordal four-cycles, one of $\{B_1,B_2,B_3,C_1,C_2,C_3\}$ must appear twice in those cycles. Hence one of $\{B_1,B_2,B_3,C_1,C_2,C_3\}$ must be a part of four different edges, which is a contradiction since all of these vertices appear in three different edges in the commutation graph. \\
	Alternatively, one can consider all the induced chordal decomposition which is a partition of these $18$ edges.  Out of these, it can be readily seen that any partition which contains fewer than $4$ edges can be discarded. This leaves a total of $3$ possible ways of chordal decompositions which can be exhaustively checked.
\end{proof}

\begin{figure}[!htb]
	\centering
	\includegraphics[scale=.6]{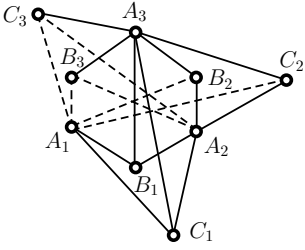}
	\caption{The commutation graph $G[\mathcal{I}_{AB}]+G[\mathcal{I}_{AC}]$ given in \eqref{xorineq} that serves as a counter-example to the necessity of decompositions into chordal graphs for the existence of monogamies. The \textit{dashed} line represents the contradiction edge.}
	\label{xor}
\end{figure}

\textit{Monogamy of cycle inequalities.---}

Without loss of generality, the cycle inequality can be written as,
\begin{equation}\begin{split}\label{chain}
\mathcal{I}(n) =    \sum\limits_{i=1}^{n-1} \langle A_iA_{i+1}\rangle -\langle A_nA_{1}\rangle 
\leq n-2.
\end{split}\end{equation}  Let's denote the corresponding commutation graph by $G(A_i,e_i)$ where the edge $e_i := (A_i,A_{i+1})$ and the sum of the index is taken to be modulo $n$.
\begin{prop}
	For any two inequalities $\mathcal{I}(n) \leq (n-2)$ and $\mathcal{I}(m) \leq (m-2)$ corresponding to cyclic commutation graphs $G(A_i,e_i)$ and $G'(A'_i,e'_i)$ having  common $k(\geq2)$ vertices, the following monogamy relation holds
	\begin{equation}\label{mr}
	\mathcal{I}(n)+\mathcal{I}(m) \overset{ND}{\leq} m+n-4
	\end{equation} in any theory satisfying the no-disturbance principle. For the monogamy to hold in the nonlocal-contextual and contextual-contextual scenario, suitable additional commutation relations are required.
\end{prop}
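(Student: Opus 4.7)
My plan is to apply the graph-theoretic Method of the preceding sections: I will exhibit an induced-chordal partition of the combined commutation graph $G\cup G'$ whose classical (equivalently, no-disturbance on chordal subgraphs) bounds add up to $\omega_c = (n-2)+(m-2) = n+m-4$. First I would fix two of the $k\ge 2$ common observables, call them $v$ and $w$. These split the $n$-cycle into two arcs $P_1,P_2$ with $\ell_1+\ell_2 = n$, and the $m$-cycle into arcs $P_3,P_4$ with $\ell_3+\ell_4 = m$. Since each of $\mathcal{I}(n)$ and $\mathcal{I}(m)$ carries a unique ``frustration'' edge (the one appearing with a minus sign in \eqref{chain}), by cyclically relabelling I may assume the frustration edge of $\mathcal{I}(n)$ lies on $P_2$ and that of $\mathcal{I}(m)$ on $P_4$.

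The core step is the cross-pairing $C_1 = P_1\cup P_4$ and $C_2 = P_2\cup P_3$, each a cycle through $v$ and $w$ of length $\ell_1+\ell_4$ and $\ell_2+\ell_3$ respectively. By the frustration-edge placement, the reduced Bell/non-contextuality expression on $C_i$ has exactly one negative edge and the rest positive, so it is itself a cycle inequality of its own length and its classical (= no-disturbance, since $C_i$ is to be chordal) bound equals $(\mathrm{length})-2$. Summing,
\begin{equation*}
(\ell_1+\ell_4-2)+(\ell_2+\ell_3-2)=(\ell_1+\ell_2)+(\ell_3+\ell_4)-4=n+m-4=\omega_c.
\end{equation*}
Invoking the Method then yields \eqref{mr}. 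In the case where $v,w$ are adjacent in one of the cycles, the shared edge appears with coefficient $2$ in the combined expression and shows up once in $C_1$ and once in $C_2$, so the same arithmetic goes through. The many-outcome cycle generalisation mentioned just after the proposition works by the same cross-pairing, with the per-cycle bound $\ell-2$ replaced by its $d$-outcome analogue.

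The main obstacle I anticipate is that the cross-paired cycles $C_1,C_2$ are not chordal by themselves as soon as their lengths exceed three, while the Method demands that they be genuinely \emph{induced} chordal subgraphs of the combined commutation graph. Consequently, suitable chord edges must already be present in $G\cup G'$. In the nonlocal-nonlocal Bell setting they come for free from spacelike separation, observables of distinct parties always commute; in the nonlocal-contextual and contextual-contextual settings some chords must be postulated, which is precisely the ``suitable additional commutation relations'' clause of the statement. Pinning down the minimal such set in each scenario, and verifying that taking the induced subgraph on the chosen vertices does not drag in spurious correlation edges that would spoil the reduced expressions on $C_1$ and $C_2$, is the delicate point and what a full proof would have to spell out case by case.
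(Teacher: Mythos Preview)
Your proposal follows essentially the same approach as the paper: split each cycle at common vertices into two arcs, cross-pair them into two subgraphs each carrying exactly one contradiction edge so that the classical bounds sum to $(n-2)+(m-2)$, and then invoke the chordal-decomposition Method, with chordality coming for free in the nonlocal--nonlocal setting and having to be postulated otherwise.

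The one point where the paper is more careful is the case $k>2$. You fix two common vertices $v,w$ and relabel, but with three or more common vertices a third common vertex $z$ can lie on $P_1$ (as an arc of the $n$-cycle) and simultaneously on $P_4$ (as an arc of the $m$-cycle); then $C_1=P_1\cup P_4$ is a figure-eight through $z$ rather than a simple cycle, and your length-minus-two bookkeeping and the ``induced chordal subgraph'' requirement need extra justification. The paper handles this not by picking an arbitrary pair $v,w$ but by a two-case split on where the contradiction edge of $G'$ sits relative to the \emph{full} list $A_{i_1},\dots,A_{i_k}$ of common vertices, and in each case writes down the two induced subgraphs explicitly. Your final paragraph does flag exactly this ``spurious edges'' concern; to close the gap along your lines you should either choose $v,w$ judiciously (for instance as the two common vertices flanking the contradiction edge of $G'$ in its cyclic order) or verify directly that the non-simple cross-paired graphs still satisfy the hypotheses of the Method.
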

\begin{proof} 
	Without loss of generality, we assume $n\leq m$ and the edge $e_n = (A_n,A_1)$ is the contradiction edge of $G$. Let's $A_{i_1}(A'_{j_1}),A_{i_2}(A'_{j_2}),..., A_{i_k}(A'_{j_k}) $ are the common vertices. The requirement for deriving monogamy relation is to decompose the whole graph into two induced chordal subgraphs such that each possesses one contradiction edge in a cycle of more than three edges. Here we provide such constructions considering two different situations separately.\\
	$(i)$ When the contradiction of $G'$ belongs to $ \{e'_{j_k},e'_{j_k+1},...,e'_{j_1-1}\}$, decompose the whole graph into two induced cycles involving $(A_{i_k},A_{i_k+1},...,A_{i_1},A'_{j_1},...,A'_{j_{k}})$ and $(A'_{j_k},A'_{j_k+1},...,A'_{j_1},A_{i_1},...,A_{i_{k}})$.  \\
	$(ii)$ When the contradiction of $G'$ belongs to $\{e'_{j_l},e'_{j_l+1},...,e'_{j_{l+1}-1}\}$ for some $l \neq k$, consider two induced subgraphs involving $(A_{i_l},A_{i_l+1},...,A_{i_{l+1}},A'_{j_1},...,A'_{j_k})$ and 
	$(A_{i_1},...,A_{i_l},A_{i_{l+1}},...,A_{i_1-1},A'_{j_k},...,A'_{j_1})$.
	\\
	If these two subgraphs are chordal, then this decomposition satisfies all the requirements for the monogamy relations to hold. In nonlocal-nonlocal scenarios this is true by construction. In nonlocal-contextual and contextual-contextual scenarios, this extra condition is required.
\end{proof}

\begin{figure}[!htb]
	\centering
	\includegraphics[scale=.45]{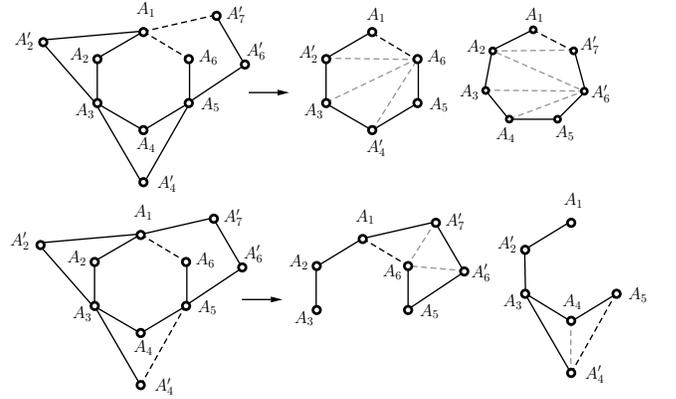}
	\caption{An example showing decomposition into induced chordal subgraphs of the commutation graph corresponding to two cyclic inequalities given in \eqref{chain} taking $n=6,m=7$. Here, the \textit{dashed} line represents the contradiction edge and \textit{grey} lines denote the additional commutation relations. The upper and lower illustration correspond to situation $(i)$ and $(ii)$ respectively as described in Proposition 3. }
	\label{cycle}
\end{figure}

The chordal decomposition presented in Proposition 3 is also applicable to the multiple outcome scenarios by considering directed commutation graphs. The cycle Bell-inequality has been generalized \cite{BKP} to the $d$-outcome scenario and its monogamous property has been noted there. This inequality can also be generalized to the contextuality scenario as,
\begin{equation}\label{BKP}
\mathcal{I}(n) = \sum_{i=1}^{n-1} \langle [A_i - A_{i+1}] \rangle +  \langle [A_{n} - A_{1}-1] \rangle \geq d-1
\end{equation} where all the observables $A_i$ has $d \in \{0,1,...,d-1\}$ possible outcomes and  $[X]$ denotes $X$ modulo $d$. Let's represent this inequality by directed cyclic commutation graph. In this case, the vertices will have $d\in \{0,1,...,d-1\}$ possible values. The edges $e_i = ( A_{i},A_{i+1})$ with the direction $i\rightarrow i+1$ represents the observed value of the quantity $[A_i - A_{i+1}]$ whereas the edge representing $[A_{n} - A_{1}-1]$ is defined as the contradiction edge. The monogamy relation $\mathcal{I}(n)+\mathcal{I}(m) \overset{ND}{\geq} 2(d-1)$ can be obtained by the method of Proposition 3 only in the situation $(i)$.\\

%\textit{Monogamy activation of $\mathcal{I}_{3322}$.---}
%
%We consider the $I_{3322}$ and $\mathcal{I}(5)$ inequalities as in the text.
%\begin{equation}\label{i3322}
%\begin{split}
%\mathcal{I}^{A_1A_4A_6}_{B_1B_2B_3} = & \langle A_1 \rangle + \langle A_4 \rangle +\langle B_1 \rangle + \langle B_2 \rangle - \langle A_1B_1 \rangle  - \langle A_1B_2 \rangle - \langle A_1B_3 \rangle \\
%& - \langle A_4B_1 \rangle - \langle A_4B_2 \rangle + \langle A_4B_3 \rangle - \langle A_6B_1 \rangle + \langle A_6B_2 \rangle \leq 4.
%\end{split}\end{equation} 
%\begin{equation}\begin{split}\label{chain5}
%\mathcal{I}(5) =    \langle A_1A_{2}\rangle +  \langle A_2A_{3}\rangle +  \langle A_3A_{4}\rangle + \langle A_4A_{5}\rangle -\langle A_5A_{1}\rangle 
%\leq 3.
%\end{split}\end{equation}
%
%A single $\mathcal{I}_{3322}$ inequality does not exhibit a monogamy with the local non-contextuality inequality $\mathcal{I}(5)$.  It can be readily verified that the given box satisfies all the no-signaling constraints and violates both the inequality, $\mathcal{I}_{3322}=4.333, \mathcal{I}(5)=4$. 
% Also, as found in \cite{CG04}, two $\mathcal{I}_{3322}$ inequalities also do not exhibit any monogamy between them. As explained in the text, on the other hand, the minimal local contextuality term $\mathcal{I}(5)$ activates a monogamy between two $\mathcal{I}_{3322}$ inequalities.  \\

\begin{table}[h!]
	\begin{center}
		\caption{In the following table the probability distribution is given which satisfies all the no-signaling (no-disturbance) constraints and violates both the inequality, $\mathcal{I}_{3322}=4.333, \mathcal{I}(5)=4$. } 
		\begin{tabular}{| l |  m{0.75cm}  m{0.75cm} | m{0.75cm}  m{0.75cm} | m{0.75cm}  m{0.75cm} | }
			\hline
			&  \multicolumn{2}{|c|}{$B_1$} &\multicolumn{2}{c}{$B_2$} &\multicolumn{2}{|c|}{$B_3$} \\
			\hline
			%  \tiny{Alice/Bob} & $B_1$  & $B_2$&  $B_3$ & \\ \hline \hline
			\multirow{4}{4em}{$A_1 A_2$}       &  \scriptsize{1/4}   & \scriptsize{1/2} & \scriptsize{1/4}  & \scriptsize{1/2}& \scriptsize{1/4}  & \scriptsize{1/2} \\ 
			& $\cdot$ & $\cdot$ &$\cdot$ &$\cdot$  & $\cdot$& $\cdot$ \\ 
			&    $\cdot$   &$\cdot$ &$\cdot$ &$\cdot$ &$\cdot$ &  $\cdot$\\ 
			&    \scriptsize{1/4}& $\cdot$ & \scriptsize{1/4} & $\cdot$& \scriptsize{1/4} & $\cdot$ \\ \hline
			\multirow{4}{4em}{$A_2 A_3$}  		 &    \scriptsize{1/4} & \scriptsize{1/2}  & \scriptsize{1/4}  & \scriptsize{1/2} & \scriptsize{1/4} & \scriptsize{1/2} \\ 
			& $\cdot$  & $\cdot$ & $\cdot$& $\cdot$ &$\cdot$ & $\cdot$ \\ 
			& $\cdot$ & $\cdot$& $\cdot$& $\cdot$& $\cdot$& $\cdot$ \\ 
			&\scriptsize{1/4}& $\cdot$& \scriptsize{1/4} & $\cdot$& \scriptsize{1/4} & $\cdot$  \\ \hline
			\multirow{4}{4em}{$A_3 A_4$}     & \scriptsize{1/4} &  \scriptsize{1/3} &  \scriptsize{1/4} &  \scriptsize{1/3} &  \scriptsize{1/4} &  \scriptsize{1/3}\\ 
			& $\cdot$  & \scriptsize{1/6}  & $\cdot$& \scriptsize{1/6}  & $\cdot$  &  \scriptsize{1/6} \\ 
			& $\cdot$  & $\cdot$&$\cdot$ & $\cdot$& $\cdot$&$\cdot$  \\ 
			&  \scriptsize{1/4} &  $\cdot$ &  \scriptsize{1/4} &  $\cdot$ &  \scriptsize{1/4} &  $\cdot$ \\ \hline
			\multirow{4}{4em}{$A_4 A_5$}          &    \scriptsize{1/4} &  \scriptsize{1/6} &  \scriptsize{1/4} &  \scriptsize{1/6} &  \scriptsize{1/4} &  \scriptsize{1/6}\\ 
			& $\cdot$  & \scriptsize{1/6}  & $\cdot$ & \scriptsize{1/6}  & $\cdot$ & \scriptsize{1/6}  \\ 
			& $\cdot$   &$\cdot$ & $\cdot$& $\cdot$&$\cdot$ &  $\cdot$\\ 
			&   \scriptsize{1/4} &  \scriptsize{1/6} &  \scriptsize{1/4} &  \scriptsize{1/6} &  \scriptsize{1/4} &  \scriptsize{1/6}\\ \hline
			\multirow{4}{4em}{$A_5 A_1$} 		&	$\cdot$   & \scriptsize{1/6}  & $\cdot$& \scriptsize{1/6}  & $\cdot$ &  \scriptsize{1/6} \\            
			&   \scriptsize{1/4} &  $\cdot$ &  \scriptsize{1/4} &  $\cdot$ & \scriptsize{1/4} &  $\cdot$ \\ 
			& \scriptsize{1/4} &  \scriptsize{1/3} &  \scriptsize{1/4} &  \scriptsize{1/3} &  \scriptsize{1/4} &  \scriptsize{1/3}\\
			&  $\cdot$ & $\cdot$& $\cdot$&$\cdot$ & $\cdot$& $\cdot$ \\  \hline
			\multirow{2}{4em}{$A_6$}          & $\cdot$ &  \scriptsize{1/2} &  \scriptsize{1/2} & $\cdot$ &  \scriptsize{1/4} &  \scriptsize{1/4} \\
			&   \scriptsize{1/2} & $\cdot$ & $\cdot$ &  \scriptsize{1/2}&  \scriptsize{1/4} &  \scriptsize{1/4} \\\hline
			%  $A_1 A_4$ & 01 & & & & & &  \\ \hline
		\end{tabular}
	\end{center}
\end{table}

%\begin{thebibliography}{99}

\end{document}